%%
%% This is file `sample-manuscript.tex',
%% generated with the docstrip utility.
%%
%% The original source files were:
%%
%% samples.dtx  (with options: `manuscript')
%% 
%% IMPORTANT NOTICE:
%% 
%% For the copyright see the source file.
%% 
%% Any modified versions of this file must be renamed
%% with new filenames distinct from sample-manuscript.tex.
%% 
%% For distribution of the original source see the terms
%% for copying and modification in the file samples.dtx.
%% 
%% This generated file may be distributed as long as the
%% original source files, as listed above, are part of the
%% same distribution. (The sources need not necessarily be
%% in the same archive or directory.)
%%
%%
%% Commands for TeXCount
%TC:macro \cite [option:text,text]
%TC:macro \citep [option:text,text]
%TC:macro \citet [option:text,text]
%TC:envir table 0 1
%TC:envir table* 0 1
%TC:envir tabular [ignore] word
%TC:envir displaymath 0 word
%TC:envir math 0 word
%TC:envir comment 0 0
%%
%%
%% The first command in your LaTeX source must be the \documentclass command.
\documentclass[manuscript]{acmart}

\usepackage[linesnumbered,ruled,vlined]{algorithm2e}
\usepackage{amstext, amsmath,amsthm,color}
\usepackage{graphicx,newfloat, subcaption}
\usepackage{listings}
\usepackage{adjustbox}

\usepackage{array,enumitem}
\usepackage{makecell,multirow, adjustbox}
\usepackage{comment}

\newcolumntype{L}{>$l<$}

\newtheorem{defn}{Definition}

\newtheorem{thrm}{Theorem}
\newtheorem{prop}{Proposition}

\SetKwInput{KwInput}{Input}                % Set the Input
\SetKwInput{KwOutput}{Output}              % set the Output

%%
%% \BibTeX command to typeset BibTeX logo in the docs
\AtBeginDocument{%
  \providecommand\BibTeX{{%
    \normalfont B\kern-0.5em{\scshape i\kern-0.25em b}\kern-0.8em\TeX}}}

%% Rights management information.  This information is sent to you
%% when you complete the rights form.  These commands have SAMPLE
%% values in them; it is your responsibility as an author to replace
%% the commands and values with those provided to you when you
%% complete the rights form.
\setcopyright{none}
% \copyrightyear{2022}
% \acmYear{2022}
% \acmDOI{10.1145/1122445.1122456}

\settopmatter{printacmref=false}

%% These commands are for a PROCEEDINGS abstract or paper.
\acmConference[FAccT'2022]{}
\acmBooktitle{Seoul, South Korea}
% \acmPrice{15.00}
% \acmISBN{978-1-4503-XXXX-X/18/06}

%%
%% Submission ID.
%% Use this when submitting an article to a sponsored event. You'll
%% receive a unique submission ID from the organizers
%% of the event, and this ID should be used as the parameter to this command.
%%\acmSubmissionID{123-A56-BU3}

%%
%% The majority of ACM publications use numbered citations and
%% references.  The command \citestyle{authoryear} switches to the
%% "author year" style.
%%
%% If you are preparing content for an event
%% sponsored by ACM SIGGRAPH, you must use the "author year" style of
%% citations and references.
%% Uncommenting
%% the next command will enable that style.
%%\citestyle{acmauthoryear}

%%
%% end of the preamble, start of the body of the document source.
\begin{document}

%%
%% The "title" command has an optional parameter,
%% allowing the author to define a "short title" to be used in page headers.
% \title{On Eliciting Non-Comparative Feedback from Human Auditors on Black-Box Classifiers: Comparative Fairness Guarantees and Learning Auditor Rules}

\title{On Learning and Enforcing Latent Assessment Models using Binary Feedback from Human Auditors Regarding Black-Box Classifiers}
%%
%% The "author" command and its associated commands are used to define
%% the authors and their affiliations.
%% Of note is the shared affiliation of the first two authors, and the
%% "authornote" and "authornotemark" commands
%% used to denote shared contribution to the research.
\author{Mukund Telukunta}
\email{mt3qb@mst.edu}
\author{Venkata Sriram Siddhardh Nadendla}
\affiliation{%
  \institution{Missouri University of Science and Technology}
  \city{Rolla}
  \state{Missouri}
  \country{USA}
  \postcode{65401}
}

\renewcommand{\shortauthors}{M. Telukunta and V.S.S. Nadendla}
\renewcommand{\shorttitle}{Learning and Enforcing Latent Assessment Models for Human Auditor Feedback}

\begin{abstract}
Algorithmic fairness literature presents numerous mathematical notions and metrics, and also points to a tradeoff between them while satisficing some/all of them simultaneously. Furthermore, the contextual nature of fairness notions makes it difficult to automate bias evaluation in diverse algorithmic systems. Therefore, in this paper, we propose a novel model called latent assessment model (LAM) to characterize binary feedback provided by human auditors, by assuming that the auditor compares the classifier’s output to his/her own intrinsic judgment for each input. We prove that individual and/or group fairness notions are guaranteed as long as the auditor’s intrinsic judgments inherently satisfy the fairness notion at hand, and are relatively similar to the classifier's evaluations. We also demonstrate this relationship between LAM and traditional fairness notions on three well-known datasets, namely COMPAS, German credit and Adult Census Income datasets. Furthermore, we also derive the minimum number of feedback samples needed to obtain PAC learning guarantees to estimate LAM for black-box classifiers. These guarantees are also validated via training standard machine learning algorithms on real binary feedback elicited from 400 human auditors regarding COMPAS.
\end{abstract}

% \begin{CCSXML}
% <ccs2012>
%  <concept>
%   <concept_id>10010520.10010553.10010562</concept_id>
%   <concept_desc>Computer systems organization~Embedded systems</concept_desc>
%   <concept_significance>500</concept_significance>
%  </concept>
%  <concept>
%   <concept_id>10010520.10010575.10010755</concept_id>
%   <concept_desc>Computer systems organization~Redundancy</concept_desc>
%   <concept_significance>300</concept_significance>
%  </concept>
%  <concept>
%   <concept_id>10010520.10010553.10010554</concept_id>
%   <concept_desc>Computer systems organization~Robotics</concept_desc>
%   <concept_significance>100</concept_significance>
%  </concept>
%  <concept>
%   <concept_id>10003033.10003083.10003095</concept_id>
%   <concept_desc>Networks~Network reliability</concept_desc>
%   <concept_significance>100</concept_significance>
%  </concept>
% </ccs2012>
% \end{CCSXML}

% \ccsdesc[500]{Computer systems organization~Embedded systems}
% \ccsdesc[300]{Computer systems organization~Redundancy}
% \ccsdesc{Computer systems organization~Robotics}
% \ccsdesc[100]{Networks~Network reliability}

%%
%% Keywords. The author(s) should pick words that accurately describe
%% the work being presented. Separate the keywords with commas.
\keywords{Fairness, Machine Learning, Auditing, Latent Assessment Model}

%%
%% This command processes the author and affiliation and title
%% information and builds the first part of the formatted document.
\maketitle

% \newcommand\scalemath[2]{\scalebox{#1}{\mbox{\ensuremath{\displaystyle #2}}}}

% \lstset{%
% 	basicstyle={\footnotesize\ttfamily},% footnotesize acceptable for monospace
% 	numbers=left,numberstyle=\footnotesize,xleftmargin=2em,% show line numbers, remove this entire line if you don't want the numbers.
% 	aboveskip=0pt,belowskip=0pt,%
% 	showstringspaces=false,tabsize=2,breaklines=true}
% \floatstyle{ruled}
% \newfloat{listing}{tb}{lst}{}
% \floatname{listing}{Listing}

\section{Introduction}\label{sec: Introduction}
In the last five years, machine learning (ML) algorithms (e.g. classifiers) have been reported as being discriminatory with respect to the sensitive attributes (e.g. race, gender) in variety of application domains such as recommender systems in criminal justice \cite{angwin2016}, e-commerce services in online markets \cite{Harvard2016}, and life insurance premiums \cite{waxmen2018}. During this period, fundamental tradeoffs between various fairness notions have been identified by two different research groups independently \cite{chouldechova2017fair, kleinberg2016inherent}, who showed that it is impossible to satiate all fairness notions at the same time. One of the best examples is the study by ProPublica \cite{angwin2016} which demonstrated that the COMPAS risk assessment tool used in Florida's courts violates \emph{statistical parity}, a group fairness notion. However, Northpointe, the organization that designed COMPAS, presented a counterargument that the tool was indeed developed to guarantee a different fairness notion called \emph{calibration} via maintaining similar positive predictive values across different social groups. While the problem of finding new fairness notions has been active over the past decade \cite{dwork2012fairness, MoritzOpportunities, kleinberg2016inherent, zafar2017}, Mehrabi \emph{et al.} \cite{mehrabi2021survey} highlights that there are too many notions for measuring unfairness with few differences to describe them. Moreover, with the existence of trade-offs between fairness notions, selecting a notion is highly situation/context dependent \cite{binns2018fairness}. This decision of choosing the metric comes with huge responsibility since it restraints one point of view while silencing another. Rovatsos \emph{et al.} \cite{rovatsos2019landscape} says that the burden of selecting the appropriate metric is presently in the hands of practitioners, the majority of whom are unfamiliar with fairness research, which is a high bar given that society as a whole has yet to decide which ethical principles to prioritise. 
% It is still unclear which particular fairness notion is most appropriate for the given context and application \cite{Gajane2017}. 

This inability to automatically identify an appropriate fairness notion necessitates the involvement of human auditors within the fair-ML pipeline so as to mitigate any social discrimination due to ML-based systems. More specifically, we need a crowd-auditing platform where human auditors can review the outcomes generated by an ML algorithm, and evaluate its biases using an apt fairness notions depending on the context and application. However, there are many practical challenges in designing an effective crowd-auditing platforms, some of which are given below: (i) \emph{physical limits of human auditors} in terms of their ability to give feedback to large datasets, (ii) \emph{data labeling is expensive} especially when the number of possible inputs (e.g. types of people affected by the system) is quite large, (iii) \emph{human feedback modeling}, where platform learns the fairness notion from limited feedback collected from human auditors, (iv) \emph{opinion heterogeneity and aggregation}, where different human auditors may have non-aligned opinion regarding the appropriate fairness notion, and above all, (v) \emph{biased auditors} whose judgements are poisoned by their inherent biases.

% faces many challenges such as identifying an appopriate fairness notion via formal characterization of human judgements, especially when they are often susceptible to various types of biases. Furthermore, data collection is expensive, which allows to collect limited feedback from the auditors.

This paper focuses primarily on human feedback modeling from preliminary feedback, so that the crowd-audit system can reproduce auditor's judgements artificially over remaining input possibilities in a given ML-based system. In the past, several researchers have attempted to model human perception of fairness, but have always tried to fit their revealed feedback to one of the existing fairness notions. For instance, Jung \emph{et al.} in \cite{jung2019eliciting} estimated the task-based similarity metric used in individual fairness notion \cite{dwork2012fairness} based on feedback elicited from auditors regarding how a given pair of individuals have been treated. Their algorithm minimizes the classification subject to the violations of specified pair. Similarly, Gillen \emph{et al.} \cite{gillen2018online} assumes the existence of an auditor who is capable of identifying unfairness given pair of individuals when the underlying similarity metric is Mahalanobis distance. From the perspective of group fairness notions, Srivastava \emph{et al.} in \cite{srivastava2019mathematical} performed an experiment by asking participants to choose among two different models to identify which notion of fairness (demographic parity or equalized odds) best captures people's perception in the context of both risk assessment and medical applications. Likewise, Harrison \emph{et al.} \cite{harrison2020empirical} surveyed 502 workers on Amazon's Mturk platform and observed a preference towards \emph{equal opportunity}. Note that all these experiments \cite{srivastava2019mathematical, harrison2020empirical} ask participants to reveal their analysis with respect to a specific fairness notion in the context of given sensitive attributes (e.g. race, gender). In fact, these papers \cite{srivastava2019mathematical, harrison2020empirical} themselves clearly point their limitation regarding generalization of their results. On the contrary, in this paper, we impose no such restrictions on the auditor in constructing their feedback with respect to satiating any specific fairness notion. However, we assume that the expert auditor employs an intrinsic fair relation (which is unknown) to evaluate a given data tuple. 

At the same time, several researchers have also attempted to identify fairness notions that are preferred by human auditors under different application contexts. Saxena \emph{et al.} in \cite{saxena2019fairness} investigated how people perceive the fitness of three different algorithmic-fairness notions in the context of loan decisions, and whether fairness perceptions change with the addition of sensitive information. Another interesting work is by Grgi\'c-Hla\v{c}a \emph{et al.} in \cite{grgic2018human}, who discovered that people's fairness concerns are typically multi-dimensional (relevance, reliability, and volitionality), especially when binary feedback was elicited. This means that modeling human feedback should consider several factors beyond social discrimination. Dressel and Farid in \cite{dressel2018accuracy} showed that COMPAS is as accurate and fair as that of untrained human auditors regarding predicting recidivism scores. Awad \emph{et al.} in \cite{awad2018moral} developed a novel experimental platform to explore the moral dilemmas faced by autonomous vehicles, agnostic to the traditional fairness notions considered in the algorithmic fairness literature. Specifically, the platform gathered 40 million decisions in ten languages from millions of people in 233 countries and territories, and categorized demographic and cross-cultural ethical variations in moral preferences globally into three major morality clusters which need not necessarily align with each other. Similar effort was also made by Lee \emph{et al.} in the context of food donation distribution as well \cite{lee2019webuildai}. On the other hand, Yaghini \emph{et al.} \cite{yaghini2021human} proposed a novel representation of family of fairness notions, equality of opportunity (EOP), which requires that for individuals with similar desert, the distribution of utility should be the same. In addition, they learned the proposed notion of fairness by eliciting human judgments and utilizing the data to optimize the EOP parameters for criminal risk assessment context. Results show that the proposed notion performs better than existing notions of algorithmic fairness in terms of equalizing utility distribution across groups. A major drawback of these approaches is that the demographics of the participants involved in the experiments \cite{yaghini2021human, grgic2018human, harrison2020empirical, saxena2019fairness} are not evenly distributed. For instance, the conducted experiments ask how models treated Caucasians and African-Americans, but there were insufficient non-Caucasian participants to assess whether there was a relationship between the participant's own demographics and what group was disadvantaged. Moreover, the participants are presented with multiple questions in the existing literature which cannot be scaled for larger decision-based models \cite{yaghini2021human}. 

In contrast to the previous works discussed above, this paper proposes a novel \emph{latent assessment model} (LAM) when human auditors reveal binary feedback (fair or not) for the given data tuple collected from the ML algorithm. Unlike most of the past literature on human perception of fairness, we assume that human auditors are given the freedom to reveal binary feedback (fair, or unfair), while not being forced to follow any specific fairness notion artificially. Although our binary feedback structure is similar to that discussed in \cite{gillen2018online}, we do not assume that this feedback is necessarily aligned with any one fairness notion. 
% In this paper, we show that both individual fairness and group fairness notions are based only on \emph{comparative justice} principles, i.e. they compare individuals/groups based on the similarity of their treatment/outcomes. 
Inspired by \emph{non-comparative justice} principles \cite{levine2005comparative, feinberg1974noncomparative, montague1980comparative}, where every individual is to be treated precisely based on their own personal attributes and merits regardless of how other individuals are treated/affected by the same service, we demonstrate that LAM is well suited to characterize people's judgements in the real world. 
% In other words, we assume that each auditor assesses the input attributes, constructs a desired outcome, and compares the algorithm's outcome with their own assessment. For example, in order to evaluate the algorithm behind college admissions, an auditor might consider grades/marks, projects, and internships in an individual's profile to construct their desired admission decision and compares it with the algorithm's outcome. 
% Specifically, the proposed LAM model  If both the system's and auditor's outcomes are (dis)similar, then the system will be deemed (un)fair by the auditor.
% We show that the proposed LAM is fundamental to achieving both individual and group fairness notions. 
We prove that a system/entity satisfies individual and/or group fairness notions if the auditor exhibits LAM in his/her fairness evaluations. We also show that converse holds true in the case of individual fairness. We also validate these relationships with traditional fairness notions on three real datasets, namely COMPAS, Adult Income Census and German credit datasets. Since both the system and auditor rules are hidden, we compute PAC learning guarantees on algorithmic auditing based on binary feedback obtained from human auditors. We compare various learning frameworks (e.g. logistic regression, support vector machines and decision trees) to estimate auditor's intrinsic judgements and their feedback on a real human feedback dataset which was collected by Dressel and Farid in \cite{dressel2018accuracy}.

\section{Preliminaries: Traditional Fairness Notions \label{sec: fairness notions}}
Much of the existing work on algorithmic fairness has focused on two important notions: \emph{group fairness} and \emph{individual fairness}. The notion of group fairness seek for parity of some statistical measure across all the protected attributes present in the data. Different versions of group-conditional metrics led to different group definitions of fairness. For example, \emph{statistical parity} \cite{dwork2012fairness} can be formally defined as follows.
\begin{defn}[Coarse Statistical Parity]
Given protected attributes $A$, if $f$ is a predictor and $Y = f(X)$, where, $X$ is the multi-attribute variable and $Y$ is the outcome, $f$ achieves statistical parity when
\begin{equation}
\mathbb{P}[f(x) = 1 \ | \ A = a] - \mathbb{P}[f(x) = 1 \ | \ A = a'] \leq \delta
\end{equation}
holds true for all $y \in Y$, $x \in X$, and $a, a' \in A$.
\label{Defn: GF}
\end{defn}
In other words, statistical parity seeks to achieve equal probability of individuals who are predicted to be positive across different groups. Similarly, the notion of \emph{equal opportunity} \cite{MoritzOpportunities} states that the true positive rate should be the same for all the groups. A coarser version of equal opportunity can be defined as follows.
\begin{defn}[Coarse Equal Opportunity]
We say that a binary predictor $f$ satisfies equal opportunity with respect to set of protected attributes $A$ and outcome, $Y = f(x)$, if
\begin{equation}
\mathbb{P}[f(x) = 1 \ | \ y = 1, A = a] - \mathbb{P}[f(x) = 1 \ | \ y = 1, A = a'] \leq \delta.
\end{equation}
holds true for all $a, a' \in A$.
\label{Defn: EO}
\end{defn}

Another example for group-conditional fairness notion is \emph{calibration} \cite{kleinberg2016inherent, chouldechova2017fair} which ensures equal positive predictive value across different groups.
\begin{defn}[Coarse Calibration]
A binary predictor $f$ satisfies calibration given a set of protected attributes $A$ and outcome, $Y = f(x)$, if
\begin{equation}
\mathbb{P}[y = 1 \ | \ f(x) = 1, A = a] - \mathbb{P}[y = 1 \ | \ f(x) = 1, A = a'] \leq \delta.
\end{equation}
holds true for all $a, a' \in A$.
\label{Defn: Calibration}
\end{defn}
On the contrary, the notion of individual fairness \cite{dwork2012fairness} states that similar individuals should be treated similarly concerning a particular task. The notion can be defined as follows.
\begin{defn}[Individual Fairness]
Given any two individuals $x_i, x_j \in \mathcal{X}$ with $\mathcal{D} \left( x_i,  x_j \right) \leq \kappa$, then the classifier $f$ is $(\kappa, \delta)$-individually fair if $d \Big( f(x_i), f(x_j) \Big) \leq \delta$. 
\label{Defn: IF}
\end{defn}
Researchers have proposed several other fairness notions in the past decade. For more details, interested readers may refer to a detailed survey on fairness in algorithmic decision-making in \cite{caton2020fairness, chouldechova2018frontiers, mehrabi2021survey, pessach2020algorithmic}.

\section{Latent Assessment Model: Relation with Traditional Fairness Notions\label{sec: LAM}}
Consider an expert auditor who, when presented with a data tuple $(x, y)$ where $x \in \mathcal{X}$ is the input given to ML model $g$ and $\tilde{y} = g(x) \in \mathcal{Y}$ is the output label, reveals binary feedback $s$ as shown below.  
\begin{defn}[$\epsilon$-Latent Assessment Model]
Let $f$ denote the fair assessment adopted by the expert auditor where, $y = f(x)$ is the subjective evaluation for the input $x$. If $g$ is an unknown system, i.e. the system's label is defined as $\tilde{y} = g(x)$, then $\epsilon$-latent assessment model ($\epsilon$-LAM) of the expert auditor is a tuple $(\mathcal{X}, \mathcal{Y}, d, f, \epsilon)$ such that
% Let $f$ denote a fair assessment (i.e. input-output relationship) of a given system, i.e. $y = f(x)$, which is evaluated subjectively by an expert auditor. If $g$ is an alternative representation, i.e. $\tilde{y} = g(x)$ (e.g. classification algorithms minimizing loss function, recommender systems), then $g$ is called $\epsilon-$noncomparatively fair w.r.t. $f$ if 
\begin{equation}
\displaystyle d \left( g(x), f(x) \right) < \epsilon, \text{ for all } x \in \mathcal{X}
\end{equation}
and the binary judgment presented by the auditor is given as follows.
\begin{equation}
s = \begin{cases}
1, & \text{if } \ d(g(x), f(x)) \geq \epsilon,
\\[1ex]
0, & \text{otherwise.}
\end{cases}
\label{Eqn: Auditor Judgement}
\end{equation}
\label{Defn: Noncomparative Fairness}
\end{defn}

In other words, we believe that the auditor compares their intrinsic classification, $y = f(x)$, with the ML model's output label and presents a binary judgment $s$. For example, consider an individual who committed felony and has multiple prior offences, received low recidivism score from a risk assessment tool. The expert auditor evaluates the individual intrinsically and believes that he/she should receive a higher recidivism score. Thus, the auditor would judge the tool's output as unfair. In this paper, the fair relation $f$ employed by the expert auditor is unknown. Therefore, our goal is to learn the proposed $\epsilon$-LAM using statistical learning techniques. Although $\epsilon$-LAM comprises of three unknowns in practice, namely $d$, $f$ and $\epsilon$, we assume that the distance metric $d$ used by the auditor is known.   
\begin{figure}[!t]
\centering
\includegraphics[width=0.7\textwidth]{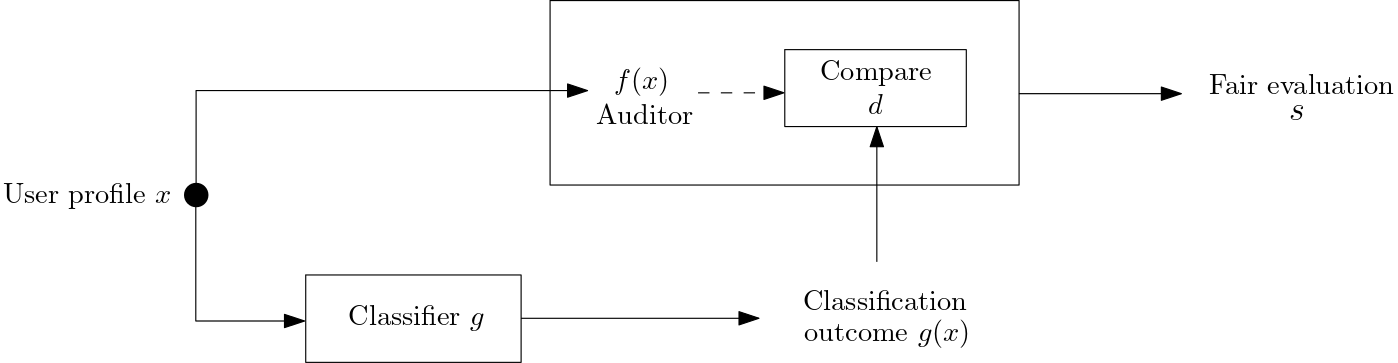}
\caption{Latent Assessment Model of the Expert Auditor}
\label{fig:NCF Architecture}
\vspace{-2ex}
\end{figure}

Before we estimate $f$, note that auditors exhibit different types of biases in the real-world. Examples include confirmation bias, hindsight bias, anchoring bias, racial and gender bias. Our proposed model, LAM, captures auditor biases through the function $f$ in Equation \eqref{Eqn: Auditor Judgement}. Therefore, in the remainder of this section, we investigate the relationship between our proposed LAM and traditional fairness notions. By doing so, we can estimate any given auditor's performance in terms of multiple fairness notions.

\subsection{Relation with Individual Fairness}\label{Sec - Individual Fairness}
In the following proposition, we show how the relation $g$ can be evaluated based on the notion of $(\kappa, \delta)$-individual fairness, when $g$ is $\epsilon$-LAM with respect to auditor's assessment $f$.
\begin{prop}
$g$ is $(\kappa, 2 \epsilon + \delta)$-individually fair, if $g$ is     $\epsilon$-LAM with respect to $f$, and $f$ is $(\kappa, \delta)$-individually fair.
\label{Prop: (e, k, d)-IF}
\end{prop}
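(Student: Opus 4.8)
The plan is to reduce the statement to a single application of the triangle inequality for the output metric $d$. Fix any two individuals $x_i, x_j \in \mathcal{X}$ with $\mathcal{D}(x_i, x_j) \leq \kappa$; the goal is to bound $d\big(g(x_i), g(x_j)\big)$ by $2\epsilon + \delta$, which is exactly the condition for $(\kappa, 2\epsilon + \delta)$-individual fairness of $g$ in Definition \ref{Defn: IF}.

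First I would insert the auditor's assessments $f(x_i)$ and $f(x_j)$ as intermediate points and expand via the triangle inequality:
\[
d\big(g(x_i), g(x_j)\big) \leq d\big(g(x_i), f(x_i)\big) + d\big(f(x_i), f(x_j)\big) + d\big(f(x_j), g(x_j)\big).
\]
Then I would bound the three terms separately. The first and third terms are each strictly less than $\epsilon$ by the $\epsilon$-LAM hypothesis $d(g(x), f(x)) < \epsilon$ from Definition \ref{Defn: Noncomparative Fairness}, applied at $x = x_i$ and $x = x_j$. The middle term is bounded by $\delta$ using the $(\kappa, \delta)$-individual fairness of $f$: since $\mathcal{D}(x_i, x_j) \leq \kappa$ by assumption, Definition \ref{Defn: IF} gives $d\big(f(x_i), f(x_j)\big) \leq \delta$. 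Summing these yields $d\big(g(x_i), g(x_j)\big) < 2\epsilon + \delta$, and since $x_i, x_j$ were an arbitrary pair subject to the proximity constraint, $g$ is $(\kappa, 2\epsilon + \delta)$-individually fair.

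I do not anticipate a substantive obstacle here: the only thing the argument genuinely requires is that $d$ satisfy the triangle inequality, which is implicit in calling it a distance metric throughout Section \ref{sec: LAM}. The one point to handle with care is the interplay between the strict LAM bound ($< \epsilon$) and the non-strict individual-fairness bound ($\leq \delta$); since this only strengthens the conclusion, reporting the clean non-strict bound $2\epsilon + \delta$ is safe. It is also worth noting that this triangle-inequality decomposition is the reusable core idea, and I would expect the same skeleton — inserting $f$ as an intermediary between copies of $g$ — to drive the converse direction for individual fairness as well as the group-fairness results stated later in the section.
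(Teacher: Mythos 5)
Your proof is correct and is essentially the paper's own argument: the same insertion of $f(x_i), f(x_j)$ as intermediate points, the same three-term triangle inequality, and the same bounding of each term by $\epsilon$, $\delta$, $\epsilon$ respectively. Your added remark about the strict versus non-strict inequalities is a minor point of care the paper itself glosses over, but it does not change the substance.
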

\begin{proof}
Given $(x_1, y_1)$ and $(x_2, y_2) \in \mathcal{X} \times \mathcal{Y}$ such that $\mathcal{D} \left( x_1,  x_2 \right) \leq \kappa$ (the two individuals are $\kappa$-similar), then $f$ is $(\kappa, \delta)$-individually fair if $d \Big( f(x_1), f(x_2) \Big) < \delta$. However, note that if $g$ is $\epsilon$-LAM with respect to $f$, then $ d \Big( g(x_1), f(x_1) \Big) < \epsilon$ and $d \Big( g(x_2), f(x_2) \Big) < \epsilon$. Therefore, by applying a chain of triangle inequalities, we obtain
\begin{equation}
\begin{split}
& d \Big( g(x_1), g(x_2) \Big) \leq d \Big( g(x_1), f(x_1) \Big) + d \Big(  f(x_1), f(x_2) \Big) + d \Big( f(x_2), g(x_2) \Big) \ < 2 \epsilon + \delta.
\end{split}
\label{Eqn: Prop1 - Final eqn}
\end{equation}
\end{proof}

We illustrate this result using the following example from the banking domain. Consider two individuals who are looking to apply for a loan. A banking system would evaluate both the applications via collecting information such as gender, race, address, credit history, collateral, and his/her ability to pay back. At the same time, consider an auditor who makes fairness judgements based on the rule: "If he/she has cleared all the debts and possesses reasonably valued collateral, the loan must be granted". Given that the auditor treats any two similar individuals similarly, the auditor satisfies individual fairness. Hence, if the evaluation of the banking system is relatively similar to the auditor's fair relation, from Proposition \ref{Prop: (e, k, d)-IF}, the banking system is also individually fair.

\begin{prop}
If $f$ is not $(\kappa, \delta)$-individually fair and if $g$ is $\epsilon$-LAM with respect to $f$, then $g$ is not $(\kappa, \delta - 2\epsilon)$-individually fair.
\label{Prop: IF - converse}
\end{prop}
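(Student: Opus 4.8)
The plan is to prove the statement directly by exhibiting a single pair of individuals that witnesses the violation for $g$, reusing the pair that already witnesses it for $f$. Since the only structural hypothesis is that $f$ fails to be $(\kappa,\delta)$-individually fair, I would first unpack this negation against Definition \ref{Defn: IF}: $f$ being $(\kappa,\delta)$-individually fair means $d(f(x_i),f(x_j)) \leq \delta$ for \emph{every} pair with $\mathcal{D}(x_i,x_j)\leq\kappa$, so its negation guarantees the existence of some specific pair $x_1,x_2\in\mathcal{X}$ with $\mathcal{D}(x_1,x_2)\leq\kappa$ yet $d(f(x_1),f(x_2)) > \delta$. This witnessing pair is what the rest of the argument is built around.

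Next I would invoke the $\epsilon$-LAM hypothesis at precisely these two points, yielding $d(g(x_1),f(x_1)) < \epsilon$ and $d(g(x_2),f(x_2)) < \epsilon$. The core is then a rearranged triangle inequality, essentially the dual of the chain used in Proposition \ref{Prop: (e, k, d)-IF}, but solved for the $g$-term from below rather than bounded above. Starting from
\begin{equation}
d\Big(f(x_1),f(x_2)\Big) \leq d\Big(f(x_1),g(x_1)\Big) + d\Big(g(x_1),g(x_2)\Big) + d\Big(g(x_2),f(x_2)\Big),
\end{equation}
I would isolate the middle term to get $d(g(x_1),g(x_2)) \geq d(f(x_1),f(x_2)) - d(f(x_1),g(x_1)) - d(g(x_2),f(x_2)) > \delta - 2\epsilon$, where the final strict inequality combines the violation $d(f(x_1),f(x_2)) > \delta$ with the two LAM bounds. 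Since $x_1,x_2$ are $\kappa$-similar and $d(g(x_1),g(x_2)) > \delta - 2\epsilon$, this pair certifies that $g$ is not $(\kappa,\delta-2\epsilon)$-individually fair, which completes the argument.

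Because this is just the dual of Proposition \ref{Prop: (e, k, d)-IF}, I expect no deep obstacle. The only point requiring care is the bookkeeping on strict versus non-strict inequalities: the LAM condition is stated with a strict $<\epsilon$, and the negation of the $\leq\delta$ fairness condition yields a strict $>\delta$, so both contributions combine to give a genuinely strict $>\delta-2\epsilon$, which is exactly the strict violation needed. I would also observe that the conclusion carries content only when $\delta-2\epsilon>0$; otherwise it is vacuously satisfiable, but this costs no generality.
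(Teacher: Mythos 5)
Your proof is correct and follows essentially the same route as the paper's: both take the witnessing pair $(x_1,x_2)$ from the failure of $f$'s individual fairness, apply the $\epsilon$-LAM bounds at those two points, and rearrange the same triangle inequality to obtain $d(g(x_1),g(x_2)) > \delta - 2\epsilon$. Your version is, if anything, a bit cleaner in its bookkeeping of strict versus non-strict inequalities and in noting the vacuity when $\delta - 2\epsilon \leq 0$, but the underlying argument is identical.
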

\begin{proof}
If $f$ is not individually fair, then for some input pair $(x_1, x_2)$ such that $\mathcal{D}(x_1, x_2) < \kappa$, we have $d\left(f(x_1), f(x_2)\right) > \delta$ for all $\kappa, \delta \in \mathbb{R}$. However, note that if $g$ is $\epsilon$-LAM with respect to $f$, then $d\left(g(x_1), f(x_1)\right) < \epsilon$ and $d\left(g(x_2), f(x_2)\right) < \epsilon$. Therefore, by applying a chain of triangle inequalities, we have
\begin{equation}
d\left(f(x_1), f(x_2)\right) \leq d\left(g(x_1), f(x_1)\right) + d\left(g(x_2), f(x_2)\right) + d\left(g(x_1), g(x_2)\right)
\end{equation}
Substituting the bounds of $d\left(g(x_2), f(x_2)\right)$ and  $d\left(g(x_1), g(x_2)\right)$ we get
\begin{equation}
\begin{split}
2 \epsilon + d\left(g(x_1), g(x_2)\right) & > d\left(g(x_1), f(x_1)\right) + d\left(g(x_2), f(x_2)\right) + d\left(g(x_1), g(x_2)\right)
\\[1ex]
& \geq d\left(f(x_1), f(x_2)\right) > \delta
\end{split}
\end{equation}
for all $\delta \in \mathbb{R}$. Therefore, we also have
\begin{equation}
d\left(g(x_1), g(x_2)\right) > \delta - 2\epsilon.
\end{equation}
\end{proof}

Consider the earlier example of banking where, there are two individuals, $A$ and $B$, who possess the same degree of merit. Imagine that the bank approves $A$'s loan application and denies $B$. This outcome remains the same as per the auditor's fair relation. Imagine further that neither $A$ nor $B$ merits the outcome. Though both banking's evaluation and auditor's judgements seem to be similar, they violate the precept, "treat similar individuals similarly". Moreover, $A$ is treated in a way that $A$ does not merit. Hence, we can assert that banking evaluation does not satisfy individual fairness.      

\subsection{Relation with Group Fairness}\label{sec: Group Fairness}
As discussed earlier, group fairness notions compare certain probabilistic measure across two protected groups. In the remaining section, we will focus on the relationship between group fairness notions and our proposed LAM. For the sake of convenience, let us denote $p_{x,y}(g,a) = \mathbb{P}[g(x) = y \ | \ A = a]$.
\begin{prop}
Given that the probability distributions are $M$-Lipschitz continuous over all possible $f$ and $g$ functions, $g$ satisfies $(2M\epsilon + \delta)$-statistical parity, if $g$ is $\epsilon$-LAM with respect to $f$, and $f$ satisfies $\delta$-statistical parity.
\label{Prop: GF}
\end{prop}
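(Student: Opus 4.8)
The plan is to reduce the claim to a telescoping decomposition of the statistical-parity gap for $g$ in terms of the corresponding gap for $f$, controlling the two cross terms via the Lipschitz hypothesis. First I would make explicit what the $M$-Lipschitz assumption buys us. Since $g$ is $\epsilon$-LAM with respect to $f$, we have $d(g(x), f(x)) < \epsilon$ for every $x \in \mathcal{X}$, so the two functions are uniformly $\epsilon$-close. Interpreting ``the probability distributions are $M$-Lipschitz continuous over all possible $f$ and $g$'' as Lipschitzness of the functional $h \mapsto \mathbb{P}[h(x) = 1 \mid A = a]$ with respect to the uniform distance induced by the same metric $d$ that appears in Definition~\ref{Defn: Noncomparative Fairness}, this yields, for each protected group $a$,
\[
\left| p_{x,1}(g,a) - p_{x,1}(f,a) \right| \leq M\epsilon .
\]
This is the single point at which the Lipschitz hypothesis enters, and it is the step I expect to require the most care, precisely because one must pin down the sense in which the probability functional is Lipschitz and confirm that the uniform bound $\sup_x d(g(x),f(x)) < \epsilon$ supplied by the LAM condition is the relevant input.

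Next I would expand the statistical-parity gap for $g$ by inserting and subtracting the $f$-probabilities:
\[
p_{x,1}(g,a) - p_{x,1}(g,a') = \big[ p_{x,1}(g,a) - p_{x,1}(f,a) \big] + \big[ p_{x,1}(f,a) - p_{x,1}(f,a') \big] + \big[ p_{x,1}(f,a') - p_{x,1}(g,a') \big].
\]
The middle bracket is at most $\delta$ by the hypothesis that $f$ satisfies $\delta$-statistical parity (Definition~\ref{Defn: GF}), while the first and third brackets are each bounded in absolute value by $M\epsilon$ from the Lipschitz estimate above. Summing the three bounds gives
\[
p_{x,1}(g,a) - p_{x,1}(g,a') \leq M\epsilon + \delta + M\epsilon = 2M\epsilon + \delta,
\]
and since $a, a' \in A$ were arbitrary, this is exactly $(2M\epsilon + \delta)$-statistical parity for $g$.

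Structurally, this argument mirrors the triangle-inequality proof of Proposition~\ref{Prop: (e, k, d)-IF}, with the pointwise metric distances $d(\cdot,\cdot)$ replaced by differences of group-conditional probabilities and the metric triangle inequality replaced by the additive insert-and-subtract decomposition. The only genuinely new ingredient is the passage from pointwise closeness $d(g(x),f(x)) < \epsilon$ to closeness of the induced group probabilities, which is exactly what the $M$-Lipschitz continuity assumption is introduced to provide; once that estimate is established, the remainder is routine bookkeeping of the three terms.
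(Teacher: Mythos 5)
Your proof is correct and takes essentially the same route as the paper: insert-and-subtract the $f$-probabilities to decompose the statistical-parity gap of $g$, bound the two cross terms by $M\epsilon$ each using the Lipschitz hypothesis applied to the uniform LAM bound $d(g(x),f(x))<\epsilon$, and bound the middle term by $\delta$ from the $\delta$-statistical parity of $f$. (Incidentally, your telescoping identity is written with the correct signs, whereas the paper's displayed decomposition has a harmless sign slip in its second bracket; the final bound is unaffected since absolute values are taken.)
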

\begin{proof}
Given the set of protected attributes $\mathcal{A}$, since $f$ satisfies $\delta$-statistical parity, we have $ || p_{x,y}(f,a) - p_{x,y}(f,a') || < \delta$ for all $a, a' \in \mathcal{A}$. Then, we have
\begin{equation}
\begin{split}
& p_{x,y}(g, a) - p_{x,y}(g, a') = \left[ p_{x,y}(g, a) - p_{x,y}(f, a) \right] + \left[ p_{x,y}(g, a') - p_{x,y}(f, a') \right] + \left[ p_{x,y}(f, a) - p_{x,y}(f, a') \right]
\end{split}
\end{equation}
Assuming $M$-Lipschitz continuity over all $f(x)$, $g(x)$, we have $|| p_{x,y}(g, a) - p_{x,y}(f, a) || < M \cdot \epsilon$, since $d (g(x), f(x) ) < \epsilon$. Combining all the inequalities, we have 
\begin{equation}
|| p_{x,y}(g, a) - p_{x,y}(g, a') || < 2 M \epsilon + \delta.
\label{Res: GF}
\end{equation}
\end{proof}
Again, consider the earlier example of loan approvals to illustrate the above proposition. Consider that there exists two groups which are classified based income - low and high. The banking system builds a credit model based purely. Moreover, the system may decide to use different requirement levels - low interest or default to low income group, so that the percentage of people getting a loan in low-income group is equal to the percentage of people getting a loan in high-income group. Now, suppose an auditor presents fair judgements based on the rule: “If Group A has a FICO credit score of 550 and cleared all the debts, the loan must be granted. If Group B has a FICO score of 700 and has valuable collateral, grant the loan”. Note that, the auditor's fair relation is somewhat similar to that of the bank's policy. Since the bank's policy is known to be statistically fair, the auditor is also unbiased from a group fairness perspective.  

Similarly, the following two propositions identify the relationship between our proposed non-comparative notion and two other group fairness notions, namely coarse equal opportunity and coarse calibration.
\begin{prop}
Given that the probability distributions are $M$-Lipschitz continuous over all possible $f$ and $g$ functions, $g$ satisfies $(2M\epsilon + \delta)$-equal opportunity, if $g$ is $\epsilon$-LAM with respect to $f$, and $f$ satisfies $\delta$-equal opportunity.
\label{Prop: EO}
\end{prop}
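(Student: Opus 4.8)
The plan is to mirror the argument used for statistical parity in Proposition~\ref{Prop: GF}, replacing the unconditional group probability $p_{x,y}(g,a)$ with its counterpart conditioned additionally on the true-positive event $\{y=1\}$. Concretely, I would introduce the shorthand $q_x(g,a) = \mathbb{P}[g(x)=1 \mid y=1, A=a]$, so that the coarse equal-opportunity condition of Definition~\ref{Defn: EO} reads $\| q_x(f,a) - q_x(f,a') \| < \delta$ for all $a,a' \in \mathcal{A}$, and the goal becomes bounding $\| q_x(g,a) - q_x(g,a') \|$.

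First I would add and subtract the quantities $q_x(f,a)$ and $q_x(f,a')$ to split the gap for $g$ into three telescoping pieces:
\begin{equation}
q_x(g,a) - q_x(g,a') = \big[ q_x(g,a) - q_x(f,a) \big] + \big[ q_x(f,a') - q_x(g,a') \big] + \big[ q_x(f,a) - q_x(f,a') \big].
\end{equation}
The last bracket is bounded in magnitude by $\delta$ using the hypothesis that $f$ satisfies $\delta$-equal opportunity. The first two brackets each compare the conditional distribution induced by $g$ with that induced by $f$ over the \emph{same} conditioning event, namely $\{y=1, A=a\}$ and $\{y=1, A=a'\}$ respectively. Since $g$ is $\epsilon$-LAM with respect to $f$, we have $d(g(x),f(x)) < \epsilon$ pointwise, so invoking the $M$-Lipschitz continuity of the probability map (exactly as in Proposition~\ref{Prop: GF}) bounds each of these brackets in magnitude by $M\epsilon$. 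Combining the three bounds through the triangle inequality yields $\| q_x(g,a) - q_x(g,a') \| < 2M\epsilon + \delta$, which is precisely $(2M\epsilon + \delta)$-equal opportunity.

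The step I expect to be the main obstacle is justifying that conditioning on the true-positive event $\{y=1\}$ does not disrupt the Lipschitz comparison between the $f$- and $g$-induced conditional probabilities. The key observation is that the conditioning set is fixed by the ground-truth label $y$ and the protected attribute $A$ alone, and is therefore identical for both $f$ and $g$; consequently the pointwise closeness $d(g(x),f(x)) < \epsilon$ propagates through the $M$-Lipschitz map to the conditional probabilities in the same way it did for the marginal probabilities in the statistical-parity case. Once this transfer is granted, the remainder of the argument is a verbatim repetition of the triangle-inequality bookkeeping from Proposition~\ref{Prop: GF}.
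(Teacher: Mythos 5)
Your proposal is correct and is essentially the paper's own argument: the paper omits this proof entirely, stating only that it is ``similar to that of Proposition~\ref{Prop: GF},'' and your write-up is exactly that adaptation --- the same three-term telescoping decomposition, the $M$-Lipschitz transfer of the pointwise bound $d(g(x),f(x))<\epsilon$, and the triangle inequality, with the conditioning event $\{y=1, A=a\}$ correctly identified as fixed by the ground truth and hence common to $f$ and $g$. (Incidentally, your decomposition also gets the signs right, which the paper's Proposition~\ref{Prop: GF} proof writes slightly incorrectly, though it does not affect the bound.)
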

\begin{proof}
The proof is similar to that of Proposition \ref{Prop: GF}. Therefore, for the sake of brevity, the proof is not included. 
\end{proof}

\begin{prop}
Given that the probability distributions are $M$-Lipschitz continuous over all possible $f$ and $g$ functions, $g$ satisfies $(2M\epsilon + \delta)$-calibration, if $g$ is $\epsilon$-LAM with respect to $f$, and $f$ satisfies $\delta$-calibration.
\label{Prop: Calibration}
\end{prop}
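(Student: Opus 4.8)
The plan is to follow verbatim the three-term decomposition that drives the proof of Proposition \ref{Prop: GF}, simply swapping the statistical-parity quantity for the calibration quantity. For convenience, write $c(h, a) = \mathbb{P}[y = 1 \mid h(x) = 1, A = a]$ for a predictor $h \in \{f, g\}$ and group $a \in \mathcal{A}$, so that $f$ satisfying $\delta$-calibration reads $|| c(f, a) - c(f, a') || < \delta$ for all $a, a' \in \mathcal{A}$, and the target is $|| c(g, a) - c(g, a') || < 2M\epsilon + \delta$. First I would introduce the $f$-based terms by adding and subtracting, obtaining
\begin{equation}
\begin{split}
c(g, a) - c(g, a') = \big[ c(g, a) - c(f, a) \big] + \big[ c(f, a) - c(f, a') \big] + \big[ c(f, a') - c(g, a') \big],
\end{split}
\end{equation}
where the middle bracket is controlled immediately by the $\delta$-calibration hypothesis on $f$.

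Next I would bound the two outer brackets. Since $g$ is $\epsilon$-LAM with respect to $f$, we have $d(g(x), f(x)) < \epsilon$ for every $x \in \mathcal{X}$. Invoking the assumed $M$-Lipschitz continuity of the probability distributions over the predictor functions, a perturbation of size less than $\epsilon$ in the predictor (as measured by $d$) induces a perturbation of at most $M\epsilon$ in the associated conditional probability, so $|| c(g, a) - c(f, a) || < M\epsilon$ and likewise for $a'$. Combining the three pieces with the triangle inequality then yields $|| c(g, a) - c(g, a') || < M\epsilon + \delta + M\epsilon = 2M\epsilon + \delta$, which is exactly the claimed bound, mirroring Equation \eqref{Res: GF}.

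The step I expect to be the main obstacle is justifying the Lipschitz bound on the outer brackets, because the calibration quantity conditions on the event $\{h(x) = 1\}$, and this conditioning event itself shifts when $f$ is replaced by $g$. Unlike the statistical-parity case of Proposition \ref{Prop: GF}, where the predictor enters only inside the probability of a single event, here the predictor appears both in the measured event and in the conditioning event. I would therefore lean directly on the stated hypothesis that the distributions are $M$-Lipschitz continuous over all possible $f$ and $g$ functions, interpreting it as precisely the assertion that the map from a predictor $h$ to its calibration probability $c(h, a)$ is $M$-Lipschitz with respect to $d$. Under this reading the $\epsilon$-LAM closeness transfers straightforwardly into the $M\epsilon$ bound on the calibration probabilities, and the remainder of the argument is identical to Proposition \ref{Prop: GF}; for brevity one could, as with Proposition \ref{Prop: EO}, merely note this parallel rather than repeat the calculation.
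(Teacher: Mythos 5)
Your proposal is correct and takes essentially the same route as the paper, whose entire proof of this proposition is the remark that it is ``similar to that of Proposition \ref{Prop: GF}'': you carry out exactly the three-term add-and-subtract decomposition of that proof with the calibration quantity $c(h,a)$ in place of $p_{x,y}(h,a)$, and arrive at the same $2M\epsilon + \delta$ bound. Your closing observation --- that in calibration the predictor appears in the \emph{conditioning} event, so the $M$-Lipschitz hypothesis must be read as Lipschitzness of the map $h \mapsto c(h,a)$ with respect to $d$ --- is a genuine subtlety that the paper glosses over entirely, and resolving it by appealing to that (admittedly generous) reading of the stated assumption is the only way the paper's intended argument goes through.
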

\begin{proof}
The proof is similar to that of Proposition \ref{Prop: GF}. Therefore, for the sake of brevity, the proof is not included. 
\end{proof}

\section{Learning LAM from Elicited Feedback}
In practice, the auditor's intrinsic rule $f$ is not revealed in his/her feedback. Therefore, we need to compute the intrinsic rule $\hat{f}$ in order to reproduce auditor's judgements for other input possibilities. At the same time, the classifier is typically available to the bias-evaluation platform as a black-box system, i.e., $g$ is also unknown to the bias-evaluation platform. In other words, a practical bias-evaluation platform is expected to compute $\hat{f}$ and identify an appropriate fairness notion for the given context so that the bias evaluation platform can algorithmically evaluate bias in a system with a large input space. 

\subsection{PAC Learning Guarantees \label{sec: PAC Guarantees}}
\begin{defn}[PAC Learnability]
We say that the classifier $f$ is PAC-learnable if there exists $N > 0$, $\epsilon > 0$, $\delta > 0$, and an algorithm $\mathcal{A}$ which receives $n \geq N$ i.i.d. samples from distribution $D$ as input, and outputs an estimated classifier $\hat{f}_n$ with at least $1-\delta$ probability such that $d(f, \hat{f}_n) \leq \epsilon$. 
\end{defn}

In the following theorem, we provide guarantees for the algorithmic non-comparative fairness evaluations, based on estimated rules $\hat{f}$ and $\hat{g}$. 

\begin{thrm}
Let $N$ denote the minimum number of samples needed to guarantee non-comparative fairness empirically, i.e. $\mathbb{P} \left( d(\hat{g}_n, \hat{f}_n) < \epsilon_{ncf} \right) > 1 - \delta_{ncf}$ for some $\epsilon_{ncf} > 0$ and $\delta_{ncf} > 0$. Then, for any auditor's intrinsic rule $f$ and classifier $g$, there exists some $0< N_f, N_g < N$, $\epsilon_g, \epsilon_f, \epsilon, \delta, \delta_g, \delta_f > 0$ such that $\epsilon_g + \epsilon_f + \epsilon_{ncf} < \epsilon$ and $\delta_g + \delta_f + \delta_{ncf} < 2 + \delta$, and an algorithm $\mathcal{A}$ that receives i.i.d. samples $\{ (x_i, y_i, z_i) \}_{i = 1}^n$ as input, and outputs rules $\hat{f}_n$ and $\hat{g}_n$ with a probability of $d(f(x), g(x)) < \epsilon$ being at least $1- \delta$, only when
\begin{equation}
n \geq N \triangleq \min_{\epsilon_1, \epsilon_2, \delta_1, \delta_2} (\max \{N_g, N_f\}),
\end{equation}
where $N_f$ and $N_g$ satisfy PAC learning bounds for $f$ and $g$. 
\end{thrm}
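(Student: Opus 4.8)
The plan is to reduce the joint statement to three independently controllable error sources and then glue them using the triangle inequality for the accuracy parameter and the union bound for the confidence parameter. First I would fix the target pair $(\epsilon, \delta)$ and introduce the three ``good'' events
$$E_f = \{ d(f, \hat{f}_n) < \epsilon_f \}, \quad E_g = \{ d(g, \hat{g}_n) < \epsilon_g \}, \quad E_{ncf} = \{ d(\hat{g}_n, \hat{f}_n) < \epsilon_{ncf} \}.$$
PAC learnability of $f$ supplies a sample bound $N_f(\epsilon_f, \delta_f)$ beyond which $\mathbb{P}(E_f) \geq 1 - \delta_f$, PAC learnability of $g$ supplies $N_g(\epsilon_g, \delta_g)$ beyond which $\mathbb{P}(E_g) \geq 1 - \delta_g$, and the empirical non-comparative fairness hypothesis gives $\mathbb{P}(E_{ncf}) \geq 1 - \delta_{ncf}$. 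On the intersection $E_f \cap E_g \cap E_{ncf}$ a chain of triangle inequalities, exactly as in the proof of Proposition~\ref{Prop: (e, k, d)-IF}, yields
$$d(f(x), g(x)) \leq d(f(x), \hat{f}_n(x)) + d(\hat{f}_n(x), \hat{g}_n(x)) + d(\hat{g}_n(x), g(x)) < \epsilon_f + \epsilon_{ncf} + \epsilon_g,$$
so choosing the split to satisfy $\epsilon_f + \epsilon_g + \epsilon_{ncf} < \epsilon$ guarantees $d(f(x), g(x)) < \epsilon$ whenever all three events hold.

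For the confidence parameter I would deliberately avoid assuming independence of $E_f$, $E_g$ and $E_{ncf}$: since $\hat{f}_n$ and $\hat{g}_n$ are estimated from the same correlated sample $\{(x_i, y_i, z_i)\}$, these events need not factorize. I would therefore control the failure probability by the union bound, $\mathbb{P}(\overline{E_f} \cup \overline{E_g} \cup \overline{E_{ncf}}) \leq \delta_f + \delta_g + \delta_{ncf}$, equivalently the Bonferroni inequality $\mathbb{P}(E_f \cap E_g \cap E_{ncf}) \geq (1 - \delta_f) + (1 - \delta_g) + (1 - \delta_{ncf}) - 2$. Rearranging this lower bound against the target $1 - \delta$ and keeping the Bonferroni constant on the left produces the stated confidence-budget constraint $\delta_f + \delta_g + \delta_{ncf} < 2 + \delta$, which together with the accuracy constraint certifies $\mathbb{P}(d(f(x), g(x)) < \epsilon) \geq 1 - \delta$.

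It remains to turn the two budget constraints into the sample bound. Both estimators are fed the same $n$ samples, so the guarantees $\mathbb{P}(E_f) \geq 1 - \delta_f$ and $\mathbb{P}(E_g) \geq 1 - \delta_g$ hold simultaneously as soon as $n \geq \max\{N_f(\epsilon_f, \delta_f), N_g(\epsilon_g, \delta_g)\}$, the empirical margin $(\epsilon_{ncf}, \delta_{ncf})$ being absorbed into the data-collection assumption. Because the allocation of the total budgets into $(\epsilon_f, \epsilon_g)$ and $(\delta_f, \delta_g)$ — the free variables $\epsilon_1, \epsilon_2, \delta_1, \delta_2$ — is ours to choose subject to the two summation constraints, the smallest sufficient sample size is obtained by minimizing over all feasible splits, which is precisely $N = \min_{\epsilon_1, \epsilon_2, \delta_1, \delta_2} \max\{N_g, N_f\}$.

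The step I expect to be the \emph{main obstacle} is this final optimization rather than the gluing. Relaxing $\epsilon_f$ or $\delta_f$ to shrink one PAC bound necessarily tightens the residual budget left for the other estimator through $\epsilon_f + \epsilon_g < \epsilon - \epsilon_{ncf}$ and $\delta_f + \delta_g < 2 + \delta - \delta_{ncf}$, so $N_f$ and $N_g$ move in opposite directions and the inner $\max$ is minimized at a balance point where neither dominates. Making this trade-off well-posed requires monotonicity of the PAC sample complexities $N_f(\cdot, \cdot)$ and $N_g(\cdot, \cdot)$ in their accuracy and confidence arguments, together with the feasibility conditions $\epsilon_{ncf} < \epsilon$ and $\delta_{ncf} < 2 + \delta$ that leave a strictly positive learning budget; verifying that a feasible minimizer exists and that the minimized $\max\{N_f, N_g\}$ indeed certifies the claimed $(\epsilon, \delta)$ guarantee is the crux of the argument.
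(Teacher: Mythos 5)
Your skeleton --- the three good events $E_f$, $E_g$, $E_{ncf}$, triangle-inequality gluing on their intersection, a union bound for the confidence, and a final optimization over the budget split --- is the same decomposition the paper uses, and your gluing direction is in fact the correct one: the paper instead asserts in Equation \eqref{Eqn: PAC-NCF-2} that $\mathbb{P}\left( d(g(x),f(x)) < \epsilon \right)$ is \emph{upper} bounded by the sum of the three probabilities, an inequality that is neither valid in general nor of any use for establishing a lower bound of $1-\delta$. What set containment actually gives, and what you wrote, is $\mathbb{P}\left( d(f(x),g(x)) < \epsilon \right) \geq \mathbb{P}\left( E_f \cap E_g \cap E_{ncf} \right)$.

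The genuine gap is in your confidence-budget step, and it is the same flaw that sits in the paper's proof and in the theorem statement itself. The union bound gives $\mathbb{P}\left( E_f \cap E_g \cap E_{ncf} \right) \geq 1 - (\delta_f + \delta_g + \delta_{ncf})$; to conclude $\mathbb{P}\left( d(f(x),g(x)) < \epsilon \right) \geq 1 - \delta$ you therefore need $\delta_f + \delta_g + \delta_{ncf} \leq \delta$. The stated constraint $\delta_f + \delta_g + \delta_{ncf} < 2 + \delta$ certifies nothing: take $\delta_f = \delta_g = \delta_{ncf} = 0.7$ and $\delta = 0.1$; the constraint holds, yet your lower bound on the intersection is $1 - 2.1 = -1.1$, which is vacuous. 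The ``$2+\delta$'' form arises only if one demands that the \emph{sum} $\mathbb{P}(E_f) + \mathbb{P}(E_g) + \mathbb{P}(E_{ncf})$ exceed $1-\delta$ without subtracting the Bonferroni constant $2$ --- which is exactly the paper's requirement $(1-\delta_g) + (1-\delta_f) + (1-\delta_{ncf}) > 1 - \delta$, and it fails to imply the conclusion for the same reason. Your sentence ``rearranging this lower bound \ldots produces the stated constraint'' papers over this: no rearrangement of $1 - (\delta_f + \delta_g + \delta_{ncf}) \geq 1 - \delta$ yields $\delta_f + \delta_g + \delta_{ncf} < 2 + \delta$ as a sufficient condition. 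The honest repair is to replace the confidence constraint by $\delta_g + \delta_f + \delta_{ncf} \leq \delta$, i.e.\ split the \emph{failure} budget rather than the success budget; with that change your argument, including the concluding $\min$--$\max$ over feasible splits with monotone sample complexities (a step you treat more carefully than the paper does), goes through --- and it is a correction the paper's own proof needs as well.
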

\begin{proof}
Our goal is to ensure that
\begin{equation}
\mathbb{P}(d(f(x), g(x)) < \epsilon) \geq 1-\lambda
\label{Eqn: PAC-NCF}
\end{equation}
for some small $\epsilon > 0$ and $\delta > 0$. Assuming that there exists some $0 < \epsilon_g < \epsilon$ and $0 < \epsilon_f < \epsilon$, we obtain an upper bound to LHS in the above equation using triangle inequality, as shown below:
% For $d(f(x), g(x)) = ||g(x) - f(x)||$, 
% \begin{equation}
% \scalemath{0.85}{
% \begin{split}
% & \quad \mathbb{P} \Big( d(g(x), f(x)) < \epsilon \Big)
% \\[1ex]
% & = \mathbb{P} \Big( ||g(x) - \hat{g}_n(x) - f(x) + \hat{f}_n(x) - \hat{f}_n(x) + \hat{g}_n(x)|| < \epsilon \Big)
% \end{split}
% }
% \end{equation}
% where, $\hat{g} = \mathcal{G}(\mathcal{X}_{N_g})$ and $\hat{f} = \mathcal{F}(\mathcal{X}_{N_f})$ indicate the estimates of $g$ and $f$ for the given learning algorithms $\mathcal{G}$ and $\mathcal{F}$ respectively.
\begin{equation}
\begin{split}
& \mathbb{P} \Big( d(g(x), f(x)) < \epsilon \Big) \leq \mathbb{P} \Big( d(g(x),\hat{g}_n(x)) < \epsilon_g \Big) + \mathbb{P} \Big(d(f(x), \hat{f}_n(x)) < \epsilon_f \Big) + \mathbb{P} \Big( d(\hat{f}_n(x), \hat{g}_n(x)) < \tau \Big)
\end{split}
\label{Eqn: PAC-NCF-2}
\end{equation}
where $\tau = \epsilon-\epsilon_g-\epsilon_f$.

Note that the first and the second probability terms correspond to PAC learnability guarantees of $g$ and $f$ respectively. Let $N_g(\epsilon_g, \delta_g)$ and $N_f(\epsilon_f, \delta_f)$ denote the minimum samples needed to guarantee PAC learnability at $g$ and $f$ respectively. In other words, the maximum of the two numbers will guarantee PAC learnability of both $g$ and $f$, i.e. $N(\epsilon_g, \epsilon_f, \delta_g, \delta_f) = \displaystyle \max \{ N_g(\epsilon_g,\delta_g), N_f(\epsilon_f, \delta_f) \}$. However, for the valid inequality in Equation \eqref{Eqn: PAC-NCF-2}, it is also essential to split $\delta$ between PAC learning bounds for $g$ and $f$, as well as the probability corresponding to empirical non-comparative fairness $\mathbb{P} \Big( d(\hat{f}_n(x), \hat{g}_n(x)) < \epsilon_{NCF} \Big) $. In other words, the split is valid only when
\begin{equation}
\begin{array}{lcl}
\epsilon_g + \epsilon_f + \epsilon_{ncf} < \epsilon, \text{ and }
\\[1ex]
(1-\delta_g) + (1-\delta_f) + (1-\delta_{ncf}) > 1 - \delta.
\end{array}
\end{equation}
Then, $N$ can be minimized by choosing an appropriate split $\epsilon_g$, $\epsilon_f$, $\delta_g$ and $\delta_f$ to obtain the necessary guarantee stated in Equation \eqref{Eqn: PAC-NCF}.
\end{proof}

Though we computed the minimum number of samples required, there should exist an algorithm $\mathcal{A}$ that receives i.i.d samples as input to estimate the intrinsic fair rule $f$ of the expert auditor. Therefore, we validate our findings using different ML models to learn and the predict auditor responses.

% To estimate the minimum number of samples, we divide the training set into multiple bundles of different sizes in increasing order. The algorithm $\mathcal{A}$ trains over each bundle with the loss function $l(f, \hat{f})$ and estimates the auditor's responses for the instances present in $\mathcal{X}_{test}$. 

\section{Results and Discussion \label{sec: Validation}}
In this section, we will validate our proposed $\epsilon$-LAM using simulation as well as real human audit data. First, we simulate human assessment by arbitrarily defining fair relation on three real-world datasets. We demonstrate the performance of the simulated assessment with respect to different fairness notions. Furthermore, using feedback from real human auditors, we perform preliminary analysis and learn their intrinsic relations. Though our framework specifies that the auditor reveals a binary judgement $s \in \{0, 1\}$, for the sake of practical evaluation, we assume that he/she reveals the exact classification of the given input in the same space as the respective system. 

\subsection{Simulating Human Assessments on Real Datasets}
In this section, we validate our theoretical findings using real-world data. We experiment with three well-known datasets, each of which are pre-processed as follows:
\begin{enumerate}
\item \textbf{\emph{ProPublica's COMPAS dataset \cite{larson2016}:}} 
In this paper, we perform same preprocessing as the original analysis of ProPublica. The races in the dataset are only restricted to African-American, Caucasian, and other. We consider \emph{females} and \emph{Caucasians} as privileged groups, and 0 (least likely) as the favourable outcome. Since the feature \emph{age} is continuous, we create different age groups (e.g. 25-45 or $>45$) and rename the features as \emph{age category}. Similar grouping is also performed on \emph{priors count} as well. To encode the categorical features (age category, priors count, and charge degree), we generated dummies for each feature and converted categorical columns to columns of 0s and 1s. Note that we also consider \emph{decile score} as an output feature to test our approach on $M$-ary classifiers (binary classifier if $M = 2$, and non-binary classifier if $M > 2$).

\item \textbf{\emph{German credit data \cite{germanCredit}:}} 
In this dataset, we consider credit history, savings, employment, sex, and age as input features. Moreover, we categorize \emph{age} into two groups: young ($<$ 26) and old ($>=$26). We assume that males and older individuals as privileged groups and 1 (good credit risk) as favourable outcome.

\item \textbf{\emph{Adult income dataset \cite{adultIcome}:}} The objective is to predict whether the income of an individual is $>\$50K$ or $<\$50K$. The input features include age, sex, race, and education. In the preprocessing phase, the continuous feature \emph{age} is transformed into different groups of ages (0-10, 11-20, and so on). For the feature \emph{race}, we limited the labels to binary by mapping `White' to 1 and all other races to 0. We have 32561 data tuples in total.
\end{enumerate}

\subsubsection{Evaluating Individual Fairness}
Note that, individual fairness notions rely on a similarity metric $\mathcal{D}$ between two individuals. Since the attributes in real-world datasets are correlated to one another, we consider Mahalanobis distance to compute the similarity between two randomly picked individuals, because it measures distances between points considering how the rest of the datapoints are distributed. The squared Mahalanobis distance can be defined as follows.
\begin{equation}
\mathcal{D}^2 = (\boldsymbol{x}_i - \boldsymbol{x}_j)^T C^{-1} (\boldsymbol{x}_i - \boldsymbol{x}_j)
\end{equation}
where $\boldsymbol{x}_i, \boldsymbol{x}_j$ are observations/rows in a dataset and $C$ is positive semi-definite covariance matrix. Initially, we compute the covariance matrix which summarizes the variance of the dataset.

We evaluate the COMPAS dataset for individual fairness with \emph{decile score} (on a scale of 1 to 10) as the output feature. In COMPAS dataset, there are 106 different clusters of individuals, where every individual in a cluster has 0 Mahalanobis distance with every other individual in that particular cluster. In other words, every individual in a cluster are similar to each other. Whereas, individuals from different clusters are dissimilar. Table \ref{tab:clusters example} shows 10 such clusters in the dataset. The auditor is said to be individually fair if the output given by him/her is same for every individual in that cluster. 

\begin{table}[h]
    \centering
    \begin{tabular}{|c|c|c|c|c|}
    \hline
    Sex & Age & Race & Prior Offenses & Charge Degree \\
    \hline
    Female & 25-45 & Caucasian & 1 to 3 & M \\
    Male & 25-45 & Other & 0 & F \\
    Female & Greater than 45 & African-American & 0 & F \\  
    Male & 25 - 45 & Other & More than 3 & M \\      
    Male & Greater than 45 & Other & 1 to 3 & M \\
    Male & Greater than 45 & African-American & More than 3 & M \\ 
    Female & Less than 25 & Other & 0 & M \\ 
    Female & Less than 25 & Other & More than 3 & M \\
    Female & 25 - 45 & Caucasian & 0 & M \\ 
    Male & 25 - 45 & Caucasian & More than 3 & M \\
    \hline
    \end{tabular}
    \caption{Example of 10 different clusters present in COMPAS dataset}
    \label{tab:clusters example}
\end{table}
As mentioned earlier, the auditor presents the true classification of the given individual in the same output space as the COMPAS dataset. Consider the following illustrative example for auditor's intrinsic rule to evaluate COMPAS dataset.
\begin{center}
\begin{adjustbox}{width=0.6\textwidth}
$
\tilde{f}_1(x) = 
\begin{cases}
1, & \text{if $x$.priors-count = 0 and $x$.charge-degree = M}
\\[0.5ex]
2, & \text{if $x$.priors-count = 0 and $x$.charge-degree = F} 
\\[0.5ex]
3, & \text{if $x$.priors-count $\in [1, 3]$ and $x$.charge-degree = F}
\\[0.5ex]
4, & \text{if $x$.priors-count $\in [4, 7]$, $x$.age-category $> 45$ and $x$.charge-degree = M} 
\\[0.5ex]
5, & \text{if $x$.priors-count $\in [4, 7]$, $x$.age-category $> 45$ and $x$.charge-degree = F} 
\\[0.5ex]
6, & \text{if $x$.priors-count $\in [4, 7]$ and $x$.age-category $\in [16-45]$} 
\\[0.5ex]
7, & \text{if $x$.priors-count $\in [8, 15]$ and $x$.age-category $> 45$}
\\[0.5ex]
8, & \text{if $x$.priors-count $\in [8, 15]$ and $x$.age-category $\in [16-45]$}
\\[0.5ex]
9, & \text{if $x$.priors-count $\in [16, 24]$ and $x$.age-category $> 45$}
\\[0.5ex]
10, & \text{if $x$.priors-count $> 24$}
\end{cases}
$
\end{adjustbox}
\end{center}
Based on the above intrinsic rule, the auditor seems to assess every individual in a cluster similarly. Unfortunately, COMPAS recidivism tool has given different outputs for the individuals in the same cluster. In other words, COMPAS recidivism tool fails to satisfy individual fairness.

\subsubsection{Evaluating Group Fairness Notions \label{Sec: simulate gf}}
\begin{figure*}[t]
\centering
\includegraphics[width=.33\textwidth, trim={0 0 1cm 0},clip]{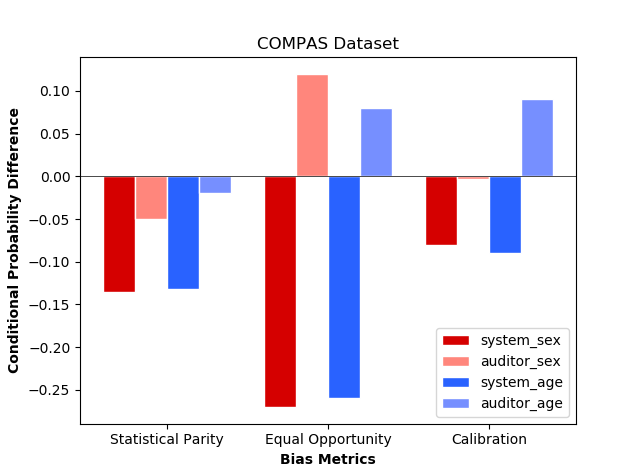}
\includegraphics[width=.33\textwidth, trim={0 0 1cm 0},clip]{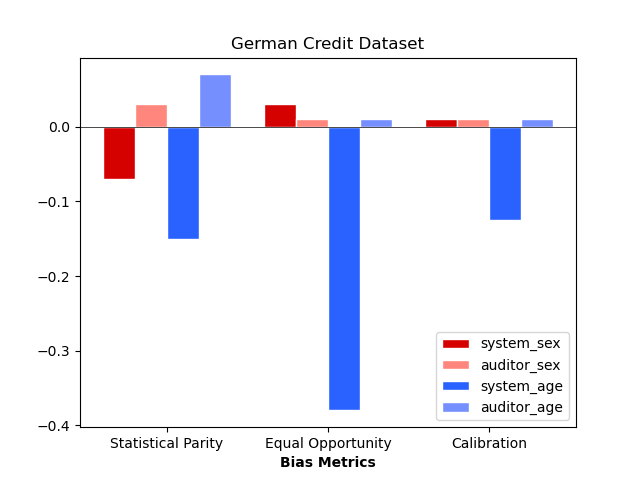}
\includegraphics[width=.33\textwidth, trim={0 0 1cm 0},clip]{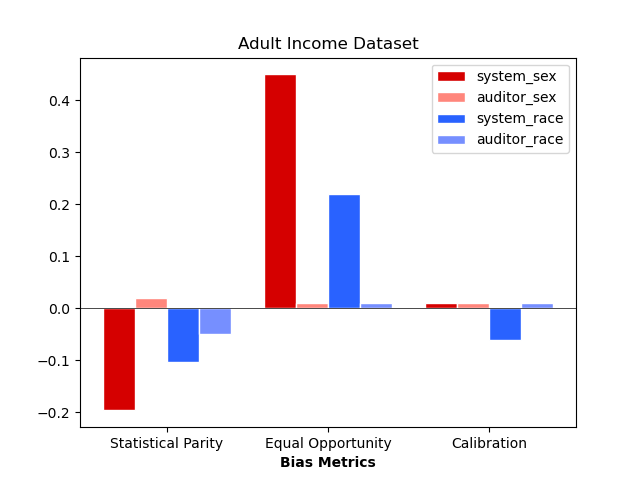}
\vspace{-2ex}
\caption{Real-world dataset vs. Auditor conditional probability differences across different group fairness notions with respect to sensitive attributes.}
\label{Fig: gf system-auditor comparison}
\vspace{-2ex}
\end{figure*}
We construct arbitrary intrinsic fair rules of the auditor for all the datasets discussed above. For COMPAS dataset, we consider \emph{number of prior offences} and \emph{degree of the offence} (felony or misdemeanor) to construct the fair relation. Note that, the binary feature \emph{two year recidivism} (most likely or least likely) is viewed as the output feature.
\begin{equation*}
f_1(x) =
\begin{cases}
1, & \text{if $x$.priors-count $\in [1,3]$ and $x$.charge-degree = F }
\\
& \qquad \qquad \qquad \qquad \text{OR}
\\
& \text{if $x$.priors-count $> 3$ and $x$.charge-degree = M}
\\[1ex] 
0, & \text{otherwise.} 
\end{cases}
\end{equation*}
Similarly, for German credit dataset, the features \emph{savings, credit history} and \emph{employment} are considered while designing the auditor's relation.
\begin{equation*}
f_2(x) = 
\begin{cases}
1, & \text{if $x$.savings $> 500$, $x$.credit-history = Paid, and $x$.employment $> 2$ years}
\\[1ex] 
2, & \text{otherwise.} 
\end{cases}    
\end{equation*}
Since the task of the Adult income dataset is to predict whether yearly income of an individual is $>$ 50K or $<=$ 50K, we consider the feature \emph{education} in the auditor's fair relation as shown below.
\begin{equation*}
f_3(x) = 
\begin{cases}
1, & \text{if $x$.education $\in$ [Bachelors, Masters, School Professor, Doctorate]}
\\[1ex] 
0, & \text{otherwise.} 
\end{cases}
\end{equation*}

Having defined auditor's evaluation functions, we now demonstrate whether the auditor is fair with respect to three group fairness notions (statistical parity, equalized odds, and calibration) as defined in Section \ref{sec: fairness notions} across every dataset. The conditional probability differences are computed between the unprivileged group to the privileged group with respect to the favorable outcome. The ideal value of this difference is 0 for all the three measures. In other words, if the probability difference is $>0$, the underprivileged group is benefited. Whereas, if the probability difference is $<0$, the privileged group is benefited. Figure \ref{Fig: gf system-auditor comparison} shows the probability differences of both the auditor and the respective real-world classifiers across different notions. We can observe that the simulated auditor assessments seem to perform well when compared to the real-world datasets. In case of COMPAS dataset, the probability differences of auditor assessments are close to zero compared to COMPAS itself. Moreover, the auditor seems to perform well in terms of statistical parity rather than equal opportunity or calibration. In other words, the auditor prefers statistical parity over calibration to evaluate COMPAS recidivism tool. Interestingly, the auditor's biases with respect to all the group fairness notions is almost equal to zero for German credit and Adult income datasets. For both the datasets, the auditor appears to prefer equal opportunity or calibration over statistical parity. Though we simulated auditor assessments using arbitrary functions, we strongly believe that \emph{expert} human auditors are capable of evaluating and performing well in terms of various fairness attributes. 

\subsection{Validation using Elicited Feedback from Real Human Auditors}
In this section, we evaluate the performance of various learning algorithms (e.g. logistic regression, decision trees and support vector machines) to estimate $f$ using real human feedback elicited by Dressel and Farid in \cite{dressel2018accuracy}. This data acquisition experiment consists of a short description of the defendant (gender, age, race, and previous criminal history) is provided to the human auditor. A total of 1000 defendant descriptions are used that are drawn randomly from the original ProPublica's COMPAS dataset. Furthermore, these descriptions were divided into 20 subsets of 50 each. The experiment consisted of 400 different crowd workers and each on of them was randomly assigned to see one of these 20 subsets. The participants predicted whether this particular individual would recidivate within 2 years of their most recent crime. The original data consists whether a crowd worker predicted correctly or not compared to the original classification. We preprocessed this dataset and obtained the true prediction by the crowd workers. As a preliminary analysis, we determine the number of crowd workers who satisfy different fairness notions. Figures \ref{Fig: crowd audit gf - race} and \ref{Fig: crowd audit gf - gender} show the percentage of crowd workers who satisfy three different group fairness notions as the probabilistic bias $\delta$ increases. In other words, as we relax the group fairness notions, the number of auditors satisfying the notions seem to increase both in terms of race and gender. Surprisingly, majority of the auditors satisfy statistical parity over calibration and equal opportunity showing clear preference among fairness notions. We observed similar results using simulated assessments as well in Section \ref{Sec: simulate gf}. However, there are only a few auditors who comply with group fairness notions when $\delta = 0$. In terms of individual fairness, we group the subsets provided to the auditors into clusters of similar defendant descriptions. The auditor is expected to give same classification for every defendant in that specific cluster. Figure \ref{Fig: crowd audit if} shows that no crowd worker satisfies individual fairness across every cluster present in a given subset. However, majority of them seem to comply with individual fairness for 80-90\% of the clusters.

\begin{figure}[t]
     \centering
     \begin{subfigure}{0.33\textwidth}
         \centering
         \includegraphics[width=\textwidth,trim={0 0 1cm 0},clip]{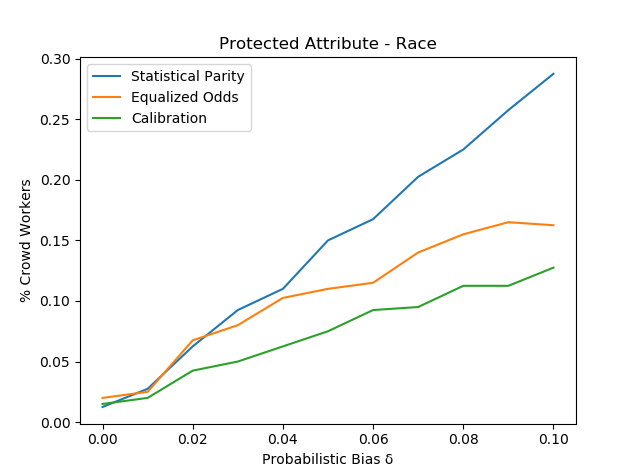}
         \caption{}
         \label{Fig: crowd audit gf - race}
     \end{subfigure}
     \begin{subfigure}{0.33\textwidth}
         \centering
         \includegraphics[width=\textwidth,trim={0 0 1cm 0},clip]{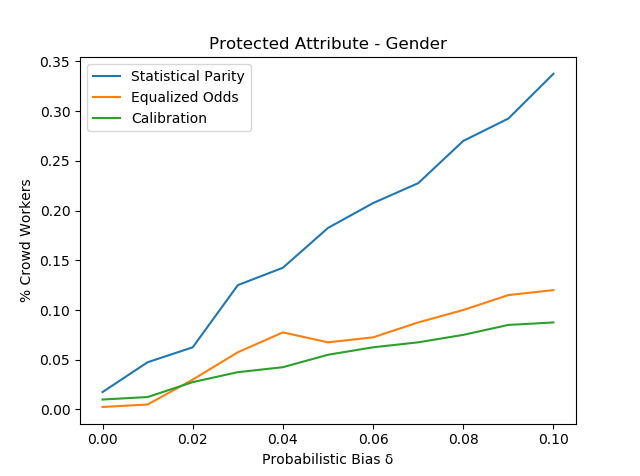}
         \caption{}
         \label{Fig: crowd audit gf - gender}
     \end{subfigure}
     \begin{subfigure}{0.33\textwidth}
         \centering
         \includegraphics[width=\textwidth, trim={0 0 1cm 0},clip]{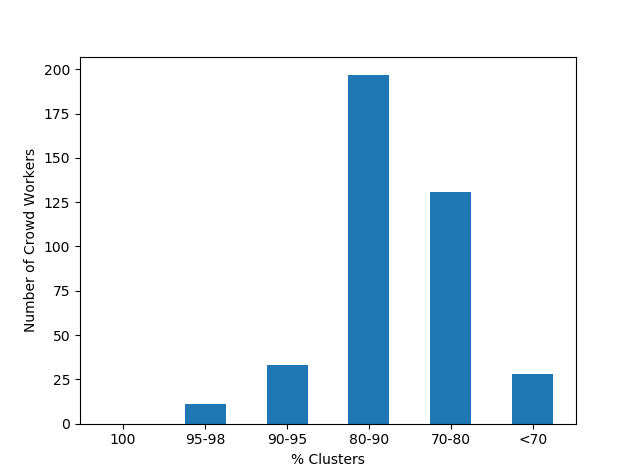}
         \caption{}
         \label{Fig: crowd audit if}
     \end{subfigure}
        \vspace{-2ex}
        \caption{\textbf{(a) and (b):} Percentage of crowd workers who satisfy respective group fairness notion as the probabilistic bias $\delta$ increases from 0 to 0.1 across the sensitive attributes race and gender. \textbf{(c):} Majority of the crowd workers satisfy individual fairness for 80\% of clusters present in the data.}
        \label{Fig: crowd audit gf vs if}
\end{figure}

Additionally, we learn and estimate crowd workers' responses using standard ML algorithms - logistic regression, decision tree, and SVM. Figure \ref{fig:predicting auditors} highlights the number of crowd workers whose responses are predicted using ML algorithms across different accuracy levels. It can be observed that logistic regression performs well by predicting the responses of 249 crowd workers with accuracy greater than 80\% compared to other models. Moreover, among the 249 crowd workers, majority of the them preferred calibration over statistical parity and equalized odds across every sensitive attribute. However, more than 60\% of the crowd workers did not prefer either of the group fairness notions while evaluating the given descriptions. Surprisingly, in case of the sensitive attribute \emph{race}, none of the crowd workers satisfy statistical parity and equalized odds. 

\begin{figure}
\begin{minipage}{0.45\linewidth}
% \hspace*{-1cm}
\includegraphics[width=\linewidth]{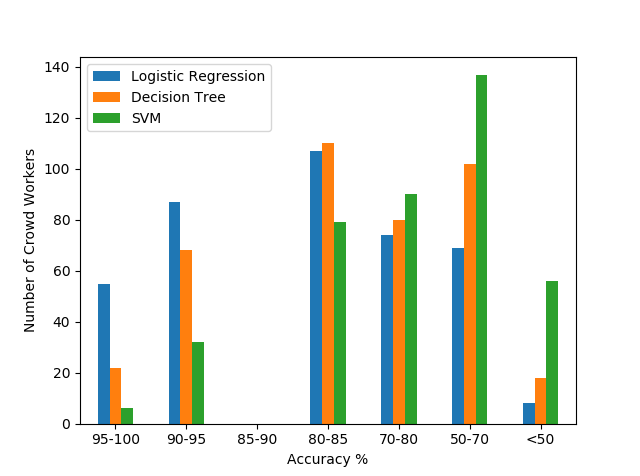}
\end{minipage}
\begin{minipage}{0.45\linewidth}
\begin{center}
\begin{adjustbox}{width=1\columnwidth,center}
\begin{tabular}{ |c|c|c|c|c| } 
\hline
& \multirow{3}{4em}{Statistical Parity} & \multirow{3}{4em}{Equalized Odds} & \multirow{3}{4.5em}{Calibration} & \multirow{3}{2.5em}{Other} \\
& & & & \\
& & & & \\
\hline
gender & 5.75\% & 2.25\% & 27.5\% & \textbf{64.5\%}\\ 
\hline
race & 0\% & 0\% & 9.5\% & \textbf{90.5\%} \\ 
\hline
\end{tabular}
\end{adjustbox} 
\end{center}
\end{minipage}
\caption{\textbf{Left:} The accuracies of different ML models on predicting the feedback of 400 crowd workers. We can see that logistic regression performed well compared to other models. \textbf{Right: } Among the 249 crowd workers whose responses are accurately predicted by logistic regression, majority of them satisfy calibration.}
\label{fig:predicting auditors}
\end{figure}

% \subsubsection{Aggregating Crowd Workers' Feedback}
% We use the following techniques to aggregate crowd workers' responses.
% \begin{enumerate}
%     \item Majority Decision \cite{von2008human}:
    
%     \item Honeypot \cite{lee2010social}:
% \end{enumerate}

\section{Conclusion and Future Work}
We developed a novel latent assessment model to characterize human auditor feedback and demonstrated its relationship with traditional fairness notions both theoretically and on real datasets. We obtained PAC learning guarantees on learning auditor's intrinsic fairness assessments, and demonstrated the learning performance of three learning algorithms on a real human feedback dataset. Consequently, this paper enabled us to accomplish two important challenges in the design of a crowd-auditing platform: (i) we can learn/mimic auditor's intrinsic evaluations using little elicited feedback and automate the evaluation on the remaining possibilities especially in high-dimensional learning algorithms, and (ii) we can also evaluate biases in auditor feedback with respect to various traditional fairness notions. 

In future, we will address all the other challenges in the design of crowd-auditing platforms. Specifically, we will use the relationship between LAM and traditional fairness notions to identify reliable auditors for feedback elicitation in our crowd-auditing platform design. Since feedback elicitation is an expensive process, we will improve our LAM model to account for feedback for data bundles, as opposed to our current feedback model for singleton data tuples. Furthermore, we will also investigate appropriate fusion rules to aggregate feedback collected from multiple auditor with heterogeneous opinions.

\clearpage
\bibliography{sample-bibliography}
\bibliographystyle{acm}

\end{document}